\long\def\@makecaption#1#2{%
  \vskip\abovecaptionskip
  \sbox\@tempboxa{\small\textbf{#1.} #2}%
  \ifdim \wd\@tempboxa >\hsize
    {\small\textbf{#1.} \justifying #2\par}%
  \else
    {\hbox to\hsize{\small\textbf{#1.} #2\hfill}}%
  \fi
  \vskip\belowcaptionskip}
\definecolor{babyblue}{rgb}{0.54, 0.81, 0.94}
\def\H{ {\mathcal H} }
\def\M{ {\mathcal M} }
\def\U{ {\mathcal U} }
\def\N{ {\mathcal N} }
\def\I{ {\mathcal I} }
\def\X{ {\mathcal X} }
\newcommand{\ms}[1]{\textcolor{orange}{#1}}
\def\id{ {\mathbbm 1} }
\newtheorem{proposition}{Proposition}
\newtheorem{theorem}{Theorem}
\newtheorem{lemma}{Lemma}
\newtheorem{alg}{Algorithm}
\newtheorem*{qsearch}{Coh-Search$\bm{(k)}$}
\renewcommand{\eqref}[1]{Eq.~(\ref{#1})}
\newcommand{\figref}[1]{Fig.~\ref{#1}}
\newcommand{\secref}[1]{Section~\ref{#1}}
\newcommand{\appref}[1]{Appendix~\ref{#1}}
\newcommand{\thmref}[1]{Theorem~\ref{#1}}
\begin{document}

\title{Amplitude-amplified coherence detection and estimation}

\date{\today}

\author{Rhea Alexander}
\email{ralexander@ugr.es}
\affiliation{Quantum Thermodynamics and Computation Group. Departamento de Electromagnetismo y Física de la Materia, Universidad de Granada, 18071 Granada, Spain.}
\author{Michalis Skotiniotis}
\affiliation{Quantum Thermodynamics and Computation Group. Departamento de Electromagnetismo y Física de la Materia, Universidad de Granada, 18071 Granada, Spain}
\affiliation{Instituto Carlos I de Física Teórica y Computacional, Universidad de Granada, 18071 Granada, Spain.}
\author{Daniel Manzano}
\affiliation{Quantum Thermodynamics and Computation Group. Departamento de Electromagnetismo y Física de la Materia, Universidad de Granada, 18071 Granada, Spain}
\affiliation{Instituto Carlos I de Física Teórica y Computacional, Universidad de Granada, 18071 Granada, Spain.}

\begin{abstract} 

The detection and characterization of quantum coherence is of fundamental importance both in the foundations of quantum theory as well as for the rapidly developing field of quantum technologies, where coherence has been linked to quantum advantage. Typical approaches for detecting coherence employ {\it coherence witnesses}---observable quantities whose expectation value can be used to certify the presence of coherence. By design, coherence witnesses are only able to detect coherence for some, but not all, possible states of a quantum system.
In this work we construct protocols capable of detecting the presence of coherence in an {\it unknown} pure quantum state $\ket{\psi}$.  Having access to $m$ copies of an unknown pure state $\ket{\psi}$ we show that the sample complexity of any experimental procedure for detecting coherence with constant probability of success $\ge 2/3$ is $\Theta(c(\ket{\psi})^{-1})$, where $c(\ket{\psi})$ is the geometric measure of coherence of $\ket{\psi}$. However, assuming access to the unitary  $U_\psi$ which prepares the unknown state $\ket{\psi}$, and its inverse $U_\psi^\dagger$, we devise a coherence detecting protocol that employs amplitude-amplification {\it a la} Grover, and uses a quadratically smaller number $O(c(\ket{\psi})^{-1/2})$ of samples. Furthermore, by augmenting amplitude amplification with phase estimation we obtain an experimental estimation of upper bounds on the geometric measure of coherence within additive error $\varepsilon$ with a sample complexity that scales as $O(1/\varepsilon)$ as compared to the $O(1/\varepsilon^2)$ sample complexity of Monte Carlo estimation methods.  The average number of samples needed in our amplitude estimation protocol provides a new operational interpretation for the geometric measure of coherence.  Finally, we also derive bounds on the amount of noise our protocols are able to tolerate.
\end{abstract}

\maketitle
\section{Introduction}
Quantum coherence, the ability to prepare quantum states in superpositions over distinct orthogonal pure states such as $\alpha \ket{\text{cat alive}} + (1-\alpha) \ket{\text{cat dead}}$, is a hallmark feature of quantum theory. Coherence serves as a key resource underpinning a range of quantum technologies, such as metrology~\cite{Giovannetti2006Metrology},  cryptography~\cite{Gisin2002Cryptography}, thermodynamics~\cite{Lostaglio2015Thermodynamic,scully:science03,tejero:preprint24}, energy harvesting \cite{dorfman:pnas13,svidzinsky:pra11,manzano:po13}; as well as in the development of quantum algorithms capable of outperforming their classical counterparts~\cite{Hillery2016CoherenceDJ,Liu2019CoherenceAlgorithms,Ahnefeld2022CoherenceShor,Ahnefeld2025CoherencePhaseEst}. 

Generating, detecting, and quantifying coherence has several important applications spanning the fundamental~\cite{Ghoshal2020WitnessGravity,Wagner2024coherence} to the applied~\cite{Diaz2018,Diaz2020,Farré2025WitnessKeyGrowing}. For example, as the first near-term quantum computers come into fruition, benchmarking and verifying their correct functioning, including their ability to maintain superpositions over distinct computational basis states over time, will be ever more important \cite{proctor:nrp25}. \ms{} Moreover, recent schemes for witnessing quantum gravity~\cite{Ghoshal2020WitnessGravity,Matsumura2022CoherenceGravity,Marletto2017Gravity,Lami2024Gravity} based on witnessing coherence or other non-classical resources is both a non-trivial and highly costly task. Reducing the complexity and cost of this task is paramount in bringing quantum technologies closer to fruition.

A brute-force approach to detecting and quantifying quantum coherence is to perform quantum state tomography~\cite{Cramer2010Tomography} in conjunction with classical post-processing of the reconstructed density matrix $\rho$ (e.g.~evaluating some measure of coherence~\cite{Baumgratz2014QuantifyingCoherence,Huang2023ExperimentalRelEnt,Zhang2024Quantifying}). However, full tomographic reconstruction becomes impractical even for modest numbers of qubits, as the sample complexity of quantum state tomography grows quadratically with the Hilbert space dimension. An alternative experimental method for detecting coherence without the need for full tomography proceeds by building a \textit{coherence witness}~\cite{Napoli2016Robustness,Ma2021CoherenceWitness}, inspired by the earlier introduction of entanglement witnesses~\cite{Terhal2000Bell,Lewenstein2000EntanglementWitness}. The basic idea is to construct an observable $W$ such that $\tr(W \sigma) > 0$ for all incoherent states $\sigma$. It follows that, if the experimental reconstruction of the expectation value $\tr(W\rho)$ yields a negative value, then we have detected coherence in $\rho$. This approach works well when the state in question is known. However, any given witness will only detect coherence if the state $\rho$ falls on the correct side of the hyperplane drawn out by $W$, and so for unknown states the construction of sample-optimal witnesses will in general not be possible.

In this work, we address the problem of coherence detection and estimation in unknown pure states, and construct two sample-optimal protocols for performing the former task given slightly different operational assumptions: namely, whether we assume black-box access to copies of the state $\ket{\psi}$ or the unitary $U_\psi$ that  prepares the state $\ket{\psi}$ from some fiducial state $\ket{0}$. We find that having access to the dynamics that prepares a quantum system, as opposed to only having copies of the quantum system itself, allows for the detection and quantification of coherence with quadratically reduced sample complexity. By sample complexity, we refer to the minimal number of copies, $m$, of the state  $\ket{\psi}$ (or unitary $U_\psi$) required on average to verify that $\ket{\psi} $ is coherent. 

It is somewhat intuitive that the sample complexity associated with determining whether a given state is coherent or not will depend on ``how close" the state in question is to the set of incoherent states. We formalize this intuition by establishing a connection between the complexity of detecting coherence in $\ket{\psi}$ and the \textit{geometric measure of coherence}~\cite{Streltsov2015Geometric,Streltsov2017CoherenceReview} $c(\ket{\psi})$, defined as 
\begin{align} \label{eq:geometric_measure_coherence}
  c(\ket{\psi}) \coloneqq 1 -  \max_{\ket{i} \in \I} \abs{\braket{i}{\psi}}^2, 
\end{align}
where the maximization is performed over the set of all incoherent pure states $\I$ of the system. In particular, in the case where we have access to copies of the state $\ket{\psi}$, we show in \thmref{thm:complexity_state} that a number $m = \Theta(c(\ket{\psi})^{-1} \log \delta^{-1})$ of  samples are required to detect coherence with an overall probability of error smaller than $\delta$.  On the other hand, if we instead assume black-box access to the unitary $U_\psi$ and its inverse, we show in  \thmref{thm:coherence_complexity_amp_estimation} that the number of samples required reduces to  $m=\Theta(c(\ket{\psi})^{-1/2} \log \delta^{-1})$. This quadratic reduction in sample complexity can be achieved by exploiting amplitude amplification, a key algorithmic primitive introduced in the seminal works~\cite{Grover1996,Brassard2002OGAmpAmp}. Similarly, by making use of a related algorithm known as amplitude estimation~\cite{Brassard2002OGAmpAmp,Aaronson2020ApproximateCounting,Suzuki2020AmpEst,Grinko2021IterativeAE} we are able to obtain an experimental estimation of upper bounds on the geometric coherence up to additive error $\varepsilon$ with a boosted scaling of $O(1/\varepsilon)$ with respect to the $O(1/\varepsilon^2)$ scaling expected from Monte Carlo estimation.

The remainder of the paper is structured as follows. In \secref{sec:hyp_testing}, we connect the problem of coherence detection to hypothesis testing. In Sections~\ref{sec:state_access} and \ref{sec:amp_amp_coherence_detection}, we derive our results on the query complexity of coherence detection when we have access to states and unitaries respectively, in each case providing explicit protocols for coherence detection. In \secref{sec:coherence_estimation}, we provide a simple protocol yielding upper bounds to the geometeric measure of coherence, which are tight in a typical experimental run. In \secref{sec:robustenss_to_noise}, we consider the robustness of our protocol for amplitude amplified coherence detection to the effects of noise. We summarize and conclude in \secref{sec:conclusions}.

\begin{figure}[t]
	\centering	\includegraphics[width=0.8\linewidth]{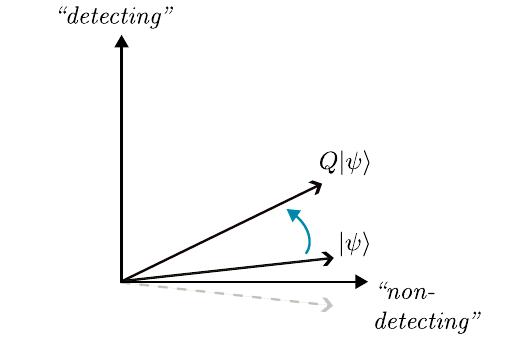}
\caption{\textbf{(Amplitude-amplified coherence detection).} 
	Schematic representing the intuition behind amplitude-amplified coherence detection.	We rotate an initial coherent state $\ket{\psi}$ in a 2-dimensional subspace towards an especially chosen subspace which allows us to detect coherence.
        \label{fig:amp_amp}}
\end{figure} 

\section{Coherence detection as a hypothesis testing problem}
\label{sec:hyp_testing}

Consider a $d$-dimensional quantum system with associated Hilbert space $\H$. Let $\I \coloneqq \{\ket{0},\ket{1},\dots,\ket{d-1}\}$ be the set of all incoherent pure state vectors, which we shall assume form a complete orthonormal basis for $\H$. 
We can decompose any pure state $\ket{\psi}$ in $\H$ in the incoherent basis $\I$ as
\begin{align}
    \ket{\psi} \coloneqq \sum_k e^{i \phi_k} \sqrt{p_k} \ket{k},
\end{align}
for some probability distribution $\{p_k\}$ and  collection of phases $\{\phi_k \in [0,2\pi)$\}. The pure state $\ket{\psi}$ is incoherent whenever there is only one non-zero term in the above sum, and otherwise it is coherent. One way of quantifying the amount of coherence in such a state is via the construction of measures of coherence (e.g. see the review~\cite{Streltsov2017CoherenceReview}), functions from the space of quantum states to the reals that cannot increase under the action of channels which do not generate coherence. This demand corresponds to the physical statement that any coherence non-generating channel can only degrade (or keep constant) the amount of coherence present in a given state. One such coherence measure in pure states is the geometric measure of coherence defined in \eqref{eq:geometric_measure_coherence}.

The central question we seek to address in this work is:
\begin{tcolorbox}[breakable,colback=babyblue!10!white,colframe=babyblue!80!black]
\begin{center}
\textbf{Question:} 
\textit{What is the sample complexity of detecting coherence in an unknown pure state $\ket{\psi}$?
}
\end{center}
\end{tcolorbox}

To address this question we proceed by rephrasing the problem of coherence detection as the following binary hypothesis testing scenario:
\begin{itemize}
\item[(H0)] \textit{(Incoherent hypothesis).} $\ket{\psi}$ is incoherent, meaning that, up to some global phase, it is equal to the incoherent state $\ket{k}$ for some $k \in \{ 0,1,\dots, d-1\}$.
\item[(H1)] \textit{(Coherent hypothesis).} There exists some fixed $\varepsilon >0$ such that $\ket{\psi}$ satisfies:
\begin{align} \label{eq:H1_trace_distance}
\min_{\ket{k} \in \I} D(\psi , \ketbra{k})  \ge \varepsilon,
\end{align}
 where $D(\rho, \tau) \coloneqq \frac{1}{2}\norm{\rho - \tau}_1$ is the trace distance, defined in terms of the trace norm $\norm{X}_1 \coloneqq \tr \sqrt{X^\dagger X}$.
\end{itemize}

When sufficient information regarding the underlying state $\ket{\psi}$ is known, in particular, if we know which incoherent state $\ket{k_{\max}}$ achieves the minimum in \eqref{eq:H1_trace_distance} (it's ``closest" incoherent state), then the problem of coherence detection is equivalent to verifying that the state $\ket{\psi}$ is not of the form $e^{i \phi}\ket{k_{\max}}$, for any globally irrelevant phase $\phi$. This is a task known in the property testing literature as \textit{equality testing}~\cite{Kada2008IdentityTesting,Montanaro2018Property}. On the other hand, for completely unknown quantum states, the problem at hand is a harder one, since in this case we do not know the precise value of $\varepsilon$ or the label 
$k_{\mathrm{max}}$. Despite this, as we shall see in the next section, this perspective produces lower bounds on the sample complexity of coherence detection, even when the state in question is unknown.

\section{Complexity of pure state coherence detection}
\label{sec:state_access}

Here we consider the sample complexity of coherence detection  given access to $m$ copies of the pure state $\ket{\psi}$. The following theorem, which involves standard binary quantum hypothesis testing (see~\cite{Helstrom1976quantum} as well as the excellent review on quantum property testing~\cite{Montanaro2018Property}), establishes the sample complexity of coherence detection.

\begin{tcolorbox}[breakable,colback=babyblue!10!white,colframe=babyblue!80!black]
\begin{theorem} \label{thm:complexity_state} Detecting coherence in the unknown pure state $\ket{\psi} \in \H$ up to fixed error probability $\delta < \frac{1}{2}$ requires a number of copies
\begin{align}
    m   = \Theta\left( \frac{1}{c(\ket{\psi})} \log \delta^{-1} \right),
\end{align}
on average. This includes the case when we are allowed to perform global measurements on the state $\ket{\psi}^{\otimes m}$.
\end{theorem}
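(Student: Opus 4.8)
The plan is to recognize coherence detection as a concrete instance of binary pure-state discrimination and to pin down the discriminating power entirely in terms of $c(\ket{\psi})$. The first step is to translate the trace-distance condition (H1) into a statement about overlaps: for pure states one has $D(\psi,\ketbra{k}) = \sqrt{1-\abs{\braket{k}{\psi}}^2}$, so that $\min_{\ket{k}\in\I} D(\psi,\ketbra{k}) = \sqrt{c(\ket{\psi})}$ and the incoherent state $\ket{k_{\max}}$ achieving the maximum in \eqref{eq:geometric_measure_coherence} is precisely the closest incoherent state, at squared overlap $\abs{\braket{k_{\max}}{\psi}}^2 = 1-c(\ket{\psi})$. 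This identifies the hardest instance of the problem: distinguishing the coherent $\ket{\psi}$ from its nearest incoherent neighbour $\ket{k_{\max}}$, two pure states whose per-copy fidelity is exactly $1-c(\ket{\psi})$.

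For the upper bound I would exhibit an explicit protocol: measure each copy in the incoherent basis $\I$ and declare ``coherent'' as soon as two distinct outcomes are recorded, declaring ``incoherent'' only if all $m$ outcomes agree. Since an incoherent state $\ket{k}$ returns the outcome $k$ deterministically, this test never produces a false positive. On a coherent state with distribution $\{p_k\}$ satisfying $\max_k p_k = 1-c(\ket{\psi})$, the probability of the error event ``all $m$ outcomes coincide'' is $\sum_k p_k^m \le (\max_k p_k)^{m-1} = (1-c(\ket{\psi}))^{m-1} \le e^{-c(\ket{\psi})(m-1)}$. Demanding this be at most $\delta$ yields $m = O\!\left(c(\ket{\psi})^{-1}\log\delta^{-1}\right)$, achieved with single-copy measurements alone.

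For the matching lower bound I would invoke the Helstrom bound on the reduced discrimination task identified above. Given $m$ copies one must distinguish $\ket{\psi}^{\otimes m}$ from $\ket{k_{\max}}^{\otimes m}$, whose squared overlap is $(1-c(\ket{\psi}))^m$; the minimal error probability over all measurements---crucially, including arbitrary collective measurements on $\ket{\psi}^{\otimes m}$---is $\tfrac12\bigl(1-\sqrt{1-(1-c(\ket{\psi}))^m}\bigr)$. Requiring this to be at most $\delta<\tfrac12$ forces $(1-c(\ket{\psi}))^m \le 4\delta(1-\delta)$, and using $-\log(1-c) \le c/(1-c)$ gives $m = \Omega\!\left(c(\ket{\psi})^{-1}\log\delta^{-1}\right)$. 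Because any valid coherence detector must in particular succeed on this adversarial pair, the bound applies to every protocol, and together with the upper bound it establishes the claimed $\Theta$ scaling; that the Helstrom bound is insensitive to the collectivity of the measurement is exactly what delivers the final sentence of the theorem.

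The main obstacle I anticipate is not the fixed-$m$ argument above but the promotion to the \emph{average} sample complexity and to adaptive strategies. I would handle this by letting the number of copies $M$ be a (possibly random, stopping-rule–determined) quantity and arguing that the expected discriminating information accrued per copy is $O(c(\ket{\psi}))$: since the fidelity between the two hypotheses degrades by a factor $1-c(\ket{\psi})$ with each additional copy, a data-processing/likelihood-ratio argument bounds the expected information after $M$ copies by $O(c(\ket{\psi})\,\mathbb{E}[M])$, which must reach $\Omega(\log\delta^{-1})$ for reliable discrimination, reproducing $\mathbb{E}[M]=\Omega(c(\ket{\psi})^{-1}\log\delta^{-1})$. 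Care is needed to ensure the stopping rule cannot exploit the structure of the incoherent hypothesis---this is where the fact that an incoherent state is genuinely indistinguishable from a coherent one that happens to reproduce its dominant outcome enters, forcing even an adaptive verifier to gather $\Omega(c(\ket{\psi})^{-1}\log\delta^{-1})$ copies before it may safely certify incoherence.
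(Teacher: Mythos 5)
Your proposal is correct and follows essentially the same route as the paper: the lower bound comes from the Helstrom limit on discriminating $\ket{\psi}^{\otimes m}$ from $\ket{k_{\max}}^{\otimes m}$ (which automatically covers collective measurements), and the upper bound from the identical protocol of measuring each copy in the incoherent basis and declaring ``coherent'' upon seeing two distinct outcomes. Two of your estimates are in fact more careful than the paper's own: you bound the false-negative probability by $\sum_k p_k^m \le \bigl(1-c(\ket{\psi})\bigr)^{m-1}$, whereas the paper writes $P_{\mathrm{err}} \le p_{k_{\max}}^m$, which undercounts the coincidence events (the exact error probability is $\sum_k p_k^m \ge p_{k_{\max}}^m$); and in the lower bound you use $-\ln(1-c) \le c/(1-c)$, which is the inequality actually needed, whereas the paper's appeal to $1-x \le e^{-x}$ yields $\ln[1-c(\ket{\psi})]^{-1} \ge c(\ket{\psi})$ and therefore bounds the ratio $\ln[4\delta(1-\delta)]^{-1}/\ln[1-c(\ket{\psi})]^{-1}$ in the wrong direction, so your version repairs a genuine slip in the published chain of inequalities. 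Your closing paragraph on adaptive stopping rules addresses the ``on average'' qualifier, which the paper's fixed-$m$ argument never treats explicitly; that part of your sketch is heuristic rather than a proof, but the concern is legitimate and goes beyond what the paper establishes.
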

\end{tcolorbox}

\begin{proof}
We begin by first establishing a lower bound on the sample complexity of detecting coherence in the state $\ket{\psi}$. 
Note that the minimal probability of error $P_{\mathrm{err}}$ associated with any protocol for detecting coherence when the state $\ket{\psi}$ is unknown is greater than or equal to the minimal probability of error when we have knowledge regarding the state $\ket{\psi}$. This is because any protocol from the former set can be simulated by the latter set by simply disregarding the information. Moreover, given knowledge of the closest incoherent state $\ket{k_{\max}}$ to $\ket{\psi}$, the problem of coherence detection reduces to equality testing. Any (potentially global) experiment which consumes $m$ copies of $\ket{\psi}$ to test equality between $\ket{\psi}$ and $\ket{k_{\max}}$ is equivalent to a protocol for discriminating $\ket{\psi}^{\otimes m}$ from $\ket{k_{\max}}^{\otimes m}$. In such a scenario, the minimal probability of error $P^*_{\mathrm{err}}$ in distinguishing between $\ket{\psi}^{\otimes m}$ and $\ket{k_{\mathrm{max}}}^{\otimes m}$ (with equal prior probability) is given by~\cite{Helstrom1976quantum,Montanaro2018Property} 
    \begin{align}
        P^*_{\mathrm{err}} &=\frac{1}{2} \left( 1 - D\left(\psi^{\otimes m}, \ketbra{k_{\mathrm{max}}}^{\otimes m}\right)\right) \notag \\
        &= \frac{1 }{2} \left(1 - \sqrt{ 1- \abs{\braket{k_{\mathrm{max}}}{\psi}}^{ 2m} } \right) \notag \\
        &= \frac{1 }{2} \left(1 - \sqrt{ 1- p_{k_{\max}}^m } \right)\, ,
    \end{align}
This error probability is achieved by the Helstrom
measurement~\cite{Helstrom1976quantum}: a measurement whose operators consist of 
projectors onto the positive and negative eigenspaces of $\psi^{\otimes m}-
\ketbra{k_{\mathrm{max}}}^{\otimes m}$. Imposing the requirement that 
$P^*_\mathrm{err}\leq P_{\mathrm{err}} \leq \delta$ gives
    \begin{align}
        \delta \ge \frac{1}{2} \left(1 - \sqrt{ 1- p_{k_{\max}}^m } \right),
    \end{align}
which in turn implies 
    \begin{align}
        p_{k_{\max}}^m \le 4 \delta (1 -\delta).
    \end{align}
As $p_{k_{\max}}^m\equiv \left(1 - c(\ket{\psi}) \right)^m$, by taking natural logarithms on both sides and rearranging we obtain
    \begin{align}
        m \ge \frac{\ln [4\delta(1-\delta))]^{-1}}{\ln [1-c(\ket{\psi})]^{-1}}  \ge \frac{\ln [4\delta(1-
        \delta))]^{-1}}{c(\ket{\psi})},
    \end{align}
where the second inequality follows from (a rearrangement of) the identity $1-x \le e^{-x}$. For fixed $\delta$ 
this can be written as
\begin{align}
    m \ge \frac{1}{c(\ket{\psi})} \ln \frac{1}{C \delta},
\end{align}
where $C = 4(1-\delta)$ is a small constant $2< C \le 4$. Therefore, we can conclude that 
$m  = \Omega \left( c(\ket{\psi})^{-1} \log(1/\delta) \right)$, establishing a lower bound on the sample 
complexity.

To establish a corresponding upper bound we consider the following protocol 
consisting of independently measuring each copy of $\ket{\psi}$ in the incoherent 
basis $\I$. If the sequence of measurement 
outcomes contains more than a single value $k$, then we know with certainty that 
the state is coherent. On the other hand, if after $m$ measurements all outcomes 
are identical, we conclude that the state is incoherent.  
The probability of error for this protocol is given by 
\begin{align}
    P_{\mathrm{err}} \leq p_{k_{\max}}^m = \left[1-c(\ket{\psi}) \right]^m \le e^{-m c(\ket{\psi})},
\end{align}
where the second inequality is due to the bound $1-x \le e^{-x}$. To achieve a 
failure probability at most $\delta$, it is sufficient to take $m = \left\lceil c(\ket{\psi})^{-1} \ln (1/\delta) \right\rceil$ establishing the upper bound $m  = O\left(  c(\ket{\psi})^{-1} \log (1/\delta) \right)$.  As both upper and lower bounds grow identically it follows that the sample complexity is exactly $\Theta(c(\ket{\psi})^{-1} \log (1/\delta))$ completing the proof. 
\end{proof}

In the next section we will show how to reduce the sample complexity in coherence detection by making use of amplitude amplification.

\section{Amplitude-amplified coherence detection}
\label{sec:amp_amp_coherence_detection}

Here we construct a protocol for detecting the presence of coherence in a state\footnote{We note that this choice is without loss of generality, since the fixed fiducial state $\ket{0}$ can be taken to be anything we choose simply by modifying the unitary $U_\psi$ appropriately, and thus need not be a member of the incoherent basis.} $\ket{\psi}=U_{\psi}\ket{0}$ assuming that we have access to both the unitary dynamics $U_\psi$ and its inverse. More precisely, we consider the following setting: given some initial state $\ket{\phi} \coloneqq \ket{0}^{\otimes \mathrm{poly}(n)}$, we can apply our black-box unitary $U \in \{ U_\psi , U_\psi^\dagger\}$ interleaved with unitary processing $V_i$:
\begin{align}
\ket{\phi_U} \coloneqq  V_m U V_{m-1} \dots  U V_1 U V_0 \ket{\phi},
\end{align}
followed by a final (potentially global) measurement $\M = \{M_{\mathrm{yes}},M_{\mathrm{no}} \}$ on the state $\phi_U$ which decides whether $\psi$ is coherent.  
We construct a particular protocol of this form based on amplitude amplification, which we describe in \secref{sec:amp_amp_est}, before providing our coherence detection algorithm in \secref{sec:amplitude_amplified_coherence_detection}, and an analysis of its corresponding sample complexity in \secref{subsection:sample_complexity_unitary}.

\subsection{Subroutine for amplitude amplified equality-testing}
\label{sec:amp_amp_est}

Before outlining our full protocol for witnessing coherence in an \textit{unknown} pure state $\ket{\psi}$, let us first review a protocol for distinguishing the pure state $\ket{\psi}$ from any fixed incoherent basis state $\ket{k}$ under the promise that
\begin{align}
  p_k =  \abs{\braket{k}{\psi}}^2 > 0.
\end{align}
Performing the POVM $\M_k \coloneqq \{ \Pi_k , \Pi_{\overline{k}} \}$, where $\Pi_k \coloneqq \ketbra{k}$ and $\Pi_{\overline{k}} \coloneqq \id - \Pi_k$, allows for the detection of coherence whenever the outcome $\overline{k}$ is obtained. The success probability associated with $\M_k$ as a coherence detection device is $1-p_k$ and the corresponding sample complexity is $(1-p_k)^{-1}$. 

To reduce the sample complexity of the above protocol we would like to increase the 
success probability associated with $\M_k$. This can be 
achieved via amplitude amplification~\cite{Grover1996,Brassard2002OGAmpAmp}.  The basic idea, illustrated in \figref{fig:amp_amp}, is to rotate the initial state $\ket{\psi}$ away from the subspace spanned by $\ket{k}$. Specifically, let us partition the state space into the direct sum $\H_S=  \H_k \oplus \H_{\overline{k}}$ of two mutually orthogonal subspaces: the ``informative'' or ``detecting'' subspace $\H_{\overline{k}}:=\mathrm{Im}(\Pi_{\overline{k}})$ and the ``non-detecting'' subspace $\H_k:=\mathrm{Im}(\Pi_k)$. Without loss of generality, and up to some globally irrelevant phase, we can always write
\begin{align} \label{subeq:phi_a_AA}
    \ket{\psi} \coloneqq \sqrt{p_k} \ket{k} + \sqrt{1-p_k }\ket{k_\perp},
\end{align}
for some state vector $\ket{k_\perp}\in\H_{\overline{k}}$.

Geometrically, we can think of each iteration of an amplitude amplification protocol as building a unitary rotation towards the detecting subspace from a sequence of two reflections. More precisely, one proceeds by constructing the so-called \textit{Grover operator} $Q_k \coloneqq  V_\psi V_k$ from the two unitaries:
\begin{align} \label{eq:grover_1}
 V_k &\coloneqq \id - 2 \ketbra{k}, \\
V_\psi &\coloneqq \id - 2 \ketbra{\psi} = U_\psi( \id - 2 \ketbra{0}) U_\psi^\dagger.
\label{eq:grover_2}
\end{align}
Writing $p_k = \cos^2 (\theta_k)$, $m$ applications of the Grover operator $Q_k$ on the state vector $\ket{\psi}$ results in a state vector of the form
\begin{align}
    Q^m_k &\ket{\psi} = \notag \\ &\cos((2m+1)\theta_k)\ket{k} + \sin((2m+1)\theta_k) \ket{k_\perp}, \label{eq:grover_state_m}
\end{align}
which consumes $m+1$ calls to the oracle $U_\psi$ and $m$ calls to its inverse $U_\psi^\dagger$. By inspection of \eqref{eq:grover_state_m},  the probability to detect coherence in the state $\ket{\psi}$ is 
\begin{align} \label{eq:p_detect}
 P_{\mathrm{suc}} =  \sin^2((2m+1)\theta_k),
\end{align}
which can be made close to $1$ by setting $m = \lfloor \pi/4\theta_k - 1/2 \rfloor$.
Note, however, that we do not assume to know the parameter $\theta_k$ (or equivalently $p_k$). Therefore the precise number of Grover rotations required to optimize this probability is also unknown. Thankfully, the following modified protocol from Ref.~\cite{Brassard2002OGAmpAmp} works regardless of whether the parameter $\theta_k$ is known, and manages to maintain the same asymptotic scaling in oracle calls as standard amplitude amplification.

\begin{qsearch}[\cite{Brassard2002OGAmpAmp}]
\textbf{Inputs:} Label $k \in \{0,1,\dots d-1\}$, black-box access to $U_\psi$ and $U_\psi^\dagger$, with the promise that $ \abs{\braket{\psi}{k}}^2 > 0$.

\textbf{Returns:} ``coherent'' label only if the state $\ket{\psi}$ is coherent.
\begin{itemize}
    \item[1:] Set $\ell \leftarrow 0$ and $c \leftarrow 3/2$.
    \item[2:] \textbf{Repeat:}
    \begin{itemize}
        \item Update $\ell \leftarrow \ell +1$. Set $M= \lceil c^\ell \rceil$.
        \item Prepare $U_\psi \ket{0}$ and measure in the incoherent basis. If the outcome is not labeled by $k$ output \textit{``coherent"} and \textbf{stop}.
        \item Pick an integer $j$ uniformly at random from $\{1,2,\dots, M\}$.
        \item Prepare $Q^j_k U_\psi \ket{0}$ and measure in the incoherent basis. If the outcome is not labeled by $k$ output \textit{``coherent"} and \textbf{stop}.
    \end{itemize}
\end{itemize}
\end{qsearch} 

This algorithm performs an iteratively increasing number of Grover rotations according to an exponential schedule to ensure that one ``lands'' closer to the detecting subspace with high probability. More formally, from \eqref{eq:p_detect} the success probability of detecting coherence after $j$ applications of the Grover operator $Q_k$ is given by $\sin^2 \left( (2j +1 ) \theta_k\right)$, and therefore the average success probability when $0 \le j < M$ is chosen uniformly at random is
\begin{align}
    P_{\mathrm{suc}} &=  \frac{1}{M} \sum_{j=0}^{M-1}  \sin^2 \left( (2j +1 ) \theta_k\right) \notag \\ &= \frac{1}{2} - \frac{\sin(4 M \theta_k)}{4 M \sin (2 \theta_k)} \ge\frac{1}{2} - \frac{1}{4 M \sin (2 \theta_k)} ,
\end{align}
where the second equality is shown in Lemma~1 of~\cite{Boyer1998TightBounds}. 

\begin{figure}[t]
	\centering	\includegraphics[width=0.999\linewidth]{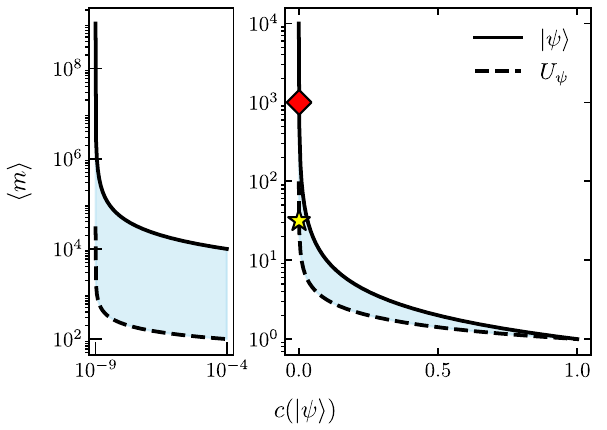}
\caption{ \textbf{(Complexity of coherence detection).} We plot the optimal sample complexity scaling of detecting coherence in the pure state $\ket{\psi}$ as a function of coherence of the state in question, as measured by the geometric measure of coherence $c(\ket{\psi})$. The straight line depicts the lower bound from \thmref{thm:complexity_state} supposing access to $m$ copies the state $\ket{\psi}$ on average. In contrast, as shown in \thmref{thm:coherence_complexity_amp_estimation}, given access the unitary, $U_\psi$, which prepares $\ket{\psi}$, amplitude-amplified coherence detection provides a quadratic improvement (dashed line). The star and diamond guide the eye to the difference between the two curves at $c(\ket{\psi}) = 10^{-3}$.
		\label{fig:complexity_coherence}}
\end{figure}

It follows that for any $M\ge 1/\sin(2\theta_k)$ the probability of success is bounded as $P_{\mathrm{suc}} \ge 1/4$.  Once this critical parameter regime has been reached in the exponential schedule, a single iteration of step~2 of the above algorithm yields a coherence detection result with a constant probability at least $1/4$, making detection likely after just a few iterations of this step. Therefore, the correctness of this algorithm for detecting coherence is immediate, all we need now is to evaluate its expected sample complexity. Importantly, the overhead of \textbf{Coh-Search$\bm{(k)}$} scales with the same asymptotic overhead as the optimal amplitude amplification protocol when $\theta_k$ is known.

\begin{proposition}[Theorem~3 of~\cite{Brassard2002OGAmpAmp}] \label{prop:coh_search_scaling}
   Suppose that $\ket{\psi} \in \H$ satisfies $\abs{\braket{\psi}{k}}^2>0$ for a known label $k$.  Then the algorithm \textup{\textbf{Coh-Search$\bm{(k)}$}} detects coherence in the state $\ket{\psi}$ using an expected number $O (1/\sqrt{1-p_k})$ of applications of $U_\psi$ and $U_\psi^\dagger$ whenever $p_k <1$ with probability of success $\ge 1/4$, and otherwise runs forever.
\end{proposition}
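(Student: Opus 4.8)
The plan is to first dispatch correctness and then bound the expected query count. Correctness is immediate from the structure of the loop: the algorithm outputs \emph{``coherent''} only upon recording a measurement outcome different from $k$ in the incoherent basis. Under the promise $\abs{\braket{k}{\psi}}^2>0$, the only incoherent pure state compatible with a nonzero overlap with $\ket{k}$ is $e^{i\phi}\ket{k}$, whose incoherent-basis measurement always returns $k$; hence any outcome $\neq k$ certifies coherence and no false positive can occur, while if $p_k=1$ the state is exactly $e^{i\phi}\ket{k}$, no informative outcome is ever seen, and the loop runs forever as claimed. For the cost bookkeeping I would note that a single Grover operator $Q_k=V_\psi V_k$ consumes exactly two oracle calls, since $V_\psi=U_\psi(\id-2\ketbra{0})U_\psi^\dagger$ uses one call each to $U_\psi$ and $U_\psi^\dagger$ while $V_k$ is a reflection about the known state $\ket{k}$ and uses none. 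Consequently round $\ell$, which prepares $Q_k^{\,j}U_\psi\ket{0}$ with $j\le M=\lceil c^\ell\rceil$, costs $O(c^\ell)$ oracle calls.

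Next I would quantify the per-round detection probability using the bound already derived above, $P_{\mathrm{suc}}(\ell)\ge \tfrac12-\tfrac{1}{4M\sin(2\theta)}$ (the shift from averaging $j$ over $\{1,\dots,M\}$ rather than $\{0,\dots,M-1\}$ perturbs this only by $O(1/M)$ and is asymptotically irrelevant). Writing $M^\ast:=1/\sin(2\theta)$ and letting $\ell_0$ be the smallest integer with $c^{\ell_0}\ge M^\ast$, every round $\ell_0+s$ obeys $M\ge c^{s}M^\ast$ and hence $P_{\mathrm{suc}}(\ell_0+s)\ge\tfrac12-\tfrac{1}{4c^{s}}$. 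The point I want to extract is not merely that the failure probability drops below $3/4$ once $\ell\ge\ell_0$, but that it decays \emph{towards} $1/2$ as $s$ grows.

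The main obstacle is that the round cost grows geometrically as $c^\ell$, so the crude estimate using only $P_{\mathrm{suc}}\ge 1/4$ (failure $\le 3/4$) would give expected cost $\lesssim\sum_t (3/4)^t\,c^{\ell_0+t}=c^{\ell_0}\sum_t(3c/4)^t$, which \emph{diverges} for $c=3/2$ because $3c/4=9/8>1$. The resolution, and the crux of the argument, is to feed in the refined failure bound $\tfrac12+\tfrac1{4c^{s}}$: the probability of still running when round $\ell_0+t$ begins is then
\[
\prod_{s=0}^{t-1}\left(\tfrac12+\tfrac1{4c^{s}}\right)=2^{-t}\prod_{s=0}^{t-1}\left(1+\tfrac1{2c^{s}}\right)\le K\,2^{-t},
\]
with $K=\prod_{s\ge0}(1+\tfrac1{2c^{s}})<\infty$ since $\sum_s c^{-s}$ converges. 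Multiplying by the round cost $O(c^{\ell_0+t})=O(M^\ast)\,c^{t}$ and summing yields expected cost $O(M^\ast)\sum_{t\ge0}(c/2)^t$, which now \emph{converges} precisely because $c=3/2<2$; the rounds before $\ell_0$ add only $\sum_{\ell<\ell_0}O(c^\ell)=O(c^{\ell_0})=O(M^\ast)$. Hence the expected number of oracle calls is $O(M^\ast)=O(1/\sin(2\theta))$.

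It remains to recast this as $O(1/\sqrt{1-p_k})$. Since $\sin(2\theta)=2\sqrt{p_k}\sqrt{1-p_k}$, whenever $p_k$ is bounded below---in particular in the regime of interest $p_k\to1$---we have $1/\sin(2\theta)=O(1/\sqrt{1-p_k})$, giving the stated scaling. The complementary regime $p_k<1/2$, where $1/\sin(2\theta)$ is inflated only by the small factor $\sqrt{p_k}$, I would treat directly: the plain measurement of $U_\psi\ket{0}$ alone detects coherence with probability $1-p_k>1/2$ each round, so the per-round failure probability is at most $p_k$, the survival probability decays like $p_k^{\ell}$, and $E[T]\le\sum_\ell p_k^{\ell-1}O(c^\ell)=O(1)$ because $cp_k<\tfrac{3}{4}<1$---consistent with $O(1/\sqrt{1-p_k})=O(1)$ there. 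Combining the two regimes establishes $E[T]=O(1/\sqrt{1-p_k})$ for all $p_k<1$, completing the argument.
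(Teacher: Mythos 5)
Your proof is correct, and in substance it is the analysis of Theorem~3 of Brassard et al.\ that the paper itself never spells out: the paper's ``proof'' of this proposition is a deferral to that reference plus the one-line intuition that the cost is dominated by the final stage, with $M_{\mathrm{final}} = O(1/\sin(2\theta_k)) = O(1/\sqrt{1-p_k})$. Your reconstruction is worth setting against that intuition because it surfaces two points the paper glosses over. First, the algorithm \textbf{Coh-Search$\bm{(k)}$} fixes $c=3/2$, and with that choice the crude bookkeeping suggested by the paper's own displayed bound (per-round success $\ge 1/4$ once $M \ge 1/\sin(2\theta_k)$, hence failure $\le 3/4$) does \emph{not} suffice: the resulting expected-cost series has ratio $3c/4 = 9/8 > 1$ and diverges, which is precisely why the Boyer--Brassard--H{\o}yer--Tapp style of analysis restricts the schedule constant to $c<4/3$ (e.g.\ $c=6/5$). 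Your refined per-round failure bound $\tfrac{1}{2}+\tfrac{1}{4c^{s}}$, whose running product against the geometric cost $\propto c^{t}$ converges for any $c<2$, is exactly the ingredient needed to make the proposition true for the algorithm as stated; the claim ``success probability $\ge 1/4$'' alone is too weak. Second, the identification $O(1/\sin(2\theta_k)) = O(1/\sqrt{1-p_k})$ holds only when $p_k$ is bounded away from zero, since $\sin(2\theta_k) = 2\sqrt{p_k}\sqrt{1-p_k}$; your separate treatment of the regime $p_k<1/2$, where the bare measurement of $U_\psi\ket{0}$ already terminates the loop at $O(1)$ expected cost, closes that gap. In short, the paper buys brevity by citation, while your argument is self-contained and, more importantly, actually validates the specific parameter choice $c=3/2$ made in the algorithm.
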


The proof of this proposition is equivalent to that of Theorem~3 of \cite{Brassard2002OGAmpAmp}. The intuition is that the scaling will be dominated by the number of Grover applications in the final stage of the algorithm $m = O(M_{\mathrm{final}})$, which is typically reached in the critical regime where we have $M_{\mathrm{final}} = O\left(  1/\sin(2 \theta_k)  \right) = O\left( 1/ \sqrt{1-p_k} \right)$.

\subsection{Algorithm for amplitude-amplified coherence detecting}
\label{sec:amplitude_amplified_coherence_detection} 

Our algorithm for amplitude-amplified coherence detection is as follows.

\begin{alg}[Amplitude-amplified coherence detection] \label{algorithm:coherence_detection}
\textbf{Inputs:} Black-box access to $U_\psi$ and $U_\psi^\dagger$.

\textbf{Returns:} ``coherent'' label only if the state $\ket{\psi}$ is coherent.

\begin{itemize}
    \item[\textbf{1:}] With a small constant number $C = O(1)$ of calls to the oracle measure $C$ copies of the state $\ket{\psi}$ in the incoherent basis. 
    \begin{itemize}
        \item  \textbf{If} any of the measurement outcomes is distinct: \textbf{return} ``coherent'' and \textbf{stop}.
        \item \textbf{Else}: construct the Grover operator $Q_k \coloneqq  V_\psi V_k$, where $k$ labels the measurement outcome obtained in each of the $C$ measurements, and proceed to step~2. 
    \end{itemize}
  
    \item[\textbf{2:}] Call \textup{\textbf{Coh-Search$\bm{(k)}$}} 
    \begin{itemize}
        \item \textbf{If} \textup{\textbf{Coh-Search$\bm{(k)}$}} terminates: \textbf{return} ``coherent''.
    \end{itemize}
      
\end{itemize}
\end{alg}

Notice that whenever Algorithm~\ref{algorithm:coherence_detection} proceeds to step~2, the outcome labelled by $k$ obtained in the inital $C$ measurements implies that $p_k=\abs{\braket{k}{\psi}}^2>0$. This ensures that the promise required to call \textbf{Coh-Search$\bm{(k)}$} is always guaranteed to be satisfied.

\subsection{Sample complexity scaling}
\label{subsection:sample_complexity_unitary}

 By analysing the average number of calls to the oracle consumed in Algorithm~\ref{algorithm:coherence_detection} and exploiting a recent result~\cite{Weggemans2025lowerboundsunitary} from the field of unitary channel discrimination~\cite{Acin2001UnitaryDistinguishability,Duan2007UnitaryDiscrim}, we arrive at the following theorem.

\begin{tcolorbox}[breakable,colback=babyblue!10!white,colframe=babyblue!80!black]
\begin{theorem} \label{thm:coherence_complexity_amp_estimation} Given black-box access to $U \in \{U_\psi, U_\psi^\dagger\}$, detecting coherence in the unknown state $\ket{\psi} = U_\psi \ket{0} \in \H$, up to fixed error probability $\delta < \frac{1}{2}$, requires 
\begin{align}
    m  = \Theta\left(  \frac{1}{\sqrt{c(\ket{\psi})}}\log \delta^{-1} \right),
\end{align}
 calls to $U$ on average.

\end{theorem}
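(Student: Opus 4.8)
The plan is to prove the two bounds separately, obtaining the $O(\cdot)$ direction from the explicit protocol assembled above and the $\Omega(\cdot)$ direction from a reduction to unitary discrimination.

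For the upper bound, I would analyze the expected query cost of Algorithm~\ref{algorithm:coherence_detection} together with Proposition~\ref{prop:coh_search_scaling}. The initial step consumes a constant $C = O(1)$ number of queries, and whenever it yields two distinct outcomes coherence is already certified. Otherwise it returns a single label $k$ carrying the guarantee $p_k > 0$, and the dominant contribution is the expected cost $O((1-p_k)^{-1/2})$ of \textbf{Coh-Search$\bm{(k)}$}. The crucial elementary observation is that $c(\ket{\psi}) = 1 - p_{k_{\max}}$, so for every label $k$ one has $1 - p_k \ge 1 - p_{k_{\max}} = c(\ket{\psi})$ and hence $(1-p_k)^{-1/2} \le c(\ket{\psi})^{-1/2}$. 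Thus every branch of the algorithm costs $O(c(\ket{\psi})^{-1/2})$ queries, and the expected cost, dominated by the most probable outcome $k_{\max}$ (obtained with probability $p_{k_{\max}}^{C}$), is $\Theta(c(\ket{\psi})^{-1/2})$. Finally, since the algorithm never produces a false positive, the binding error is the false negative under the coherent hypothesis; to drive it below $\delta$, I would, once the exponential schedule reaches the critical regime $M = O(1/\sin 2\theta_k)$, repeat that iteration $O(\log\delta^{-1})$ times. Each repetition succeeds with probability $\ge 1/4$ and costs $O(c(\ket{\psi})^{-1/2})$, so the probability that all fail is at most $(3/4)^{O(\log\delta^{-1})} \le \delta$, giving $m = O(c(\ket{\psi})^{-1/2}\log\delta^{-1})$.

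For the lower bound I would mirror the reduction used in \thmref{thm:complexity_state}: any protocol that detects coherence without knowing $\ket{\psi}$ can be simulated by one additionally handed the closest incoherent state $\ket{k_{\max}}$, so the unknown-state error probability is at least the known-$k_{\max}$ one, and it suffices to lower bound the latter. With $k_{\max}$ known, coherence detection becomes the task of discriminating, using interleaved queries to $U \in \{U_\psi, U_\psi^\dagger\}$, between a unitary $U_1 = U_\psi$ preparing the coherent state $\ket{\psi}$ with $\abs{\braket{k_{\max}}{\psi}}^2 = 1 - c(\ket{\psi})$ and an adversarial unitary $U_0$ preparing the incoherent state $\ket{k_{\max}}$. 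Choosing $U_0$ and $U_1$ to agree outside the plane $\mathrm{span}\{\ket{k_{\max}},\ket{k_\perp}\}$ and to differ there only by the rotation of angle $\theta_{k_{\max}} = \arcsin\sqrt{c(\ket{\psi})}$ that carries $\ket{k_{\max}}$ to $\ket{\psi}$, I would invoke the unitary-discrimination lower bound of Ref.~\cite{Weggemans2025lowerboundsunitary}. This certifies that distinguishing two such unitaries to error $\delta$ requires $\Omega(\theta_{k_{\max}}^{-1}\log\delta^{-1})$ queries; since $\theta_{k_{\max}} = \arcsin\sqrt{c(\ket{\psi})} = \Theta(\sqrt{c(\ket{\psi})})$ in the relevant small-coherence regime, this gives $m = \Omega(c(\ket{\psi})^{-1/2}\log\delta^{-1})$, matching the upper bound.

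I expect the delicate part to be the lower bound, specifically the reduction to unitary discrimination and the correct invocation of Ref.~\cite{Weggemans2025lowerboundsunitary}. Two points require care. First, I must verify that the cited bound applies to the fully adaptive model in which $U_\psi$ and $U_\psi^\dagger$ may be interleaved with arbitrary processing unitaries $V_i$, so that inverse access buys no more than the quadratic speedup already captured; the whole claim fails if inverse access secretly yields a further improvement. Second, I must confirm that the two-dimensional adversarial embedding realizes both hypotheses as genuine state preparations while keeping the angular separation of the associated unitaries at $\Theta(\sqrt{c(\ket{\psi})})$ rather than $\Theta(c(\ket{\psi}))$ — it is precisely this square-root, rooted in the relation $\sin^2\theta_{k_{\max}} = c(\ket{\psi})$, that produces the quadratic gap between the two theorems. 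Combining both directions yields $m = \Theta(c(\ket{\psi})^{-1/2}\log\delta^{-1})$, completing the proof.
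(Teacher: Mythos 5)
Your proposal is correct and follows essentially the same route as the paper: the upper bound analyzes Algorithm~\ref{algorithm:coherence_detection} via Proposition~\ref{prop:coh_search_scaling} and boosts the constant success probability with $O(\log\delta^{-1})$ repetitions, while the lower bound reduces coherence detection to two-element unitary discrimination using a two-dimensional rotation embedding of angle $\arcsin\sqrt{c(\ket{\psi})}$ and invokes the bound of Ref.~\cite{Weggemans2025lowerboundsunitary}. Your uniform estimate $1-p_k \ge c(\ket{\psi})$ for all $k$ even slightly streamlines the paper's term-by-term bound on $\sum_k p_k/\sqrt{1-p_k}$, but the argument is otherwise identical in structure.
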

\end{tcolorbox}

\begin{proof}[Upper bound proof]

In the best case, Algorithm~\ref{algorithm:coherence_detection} terminates with just $m=2$ calls to the oracle, which occurs when $\ket{\psi}$ is indeed coherent and we manage to detect coherence in step~1 of the algorithm. However, with probability $p_k^C$, where $k$ labels the outcome obtained in each of the $C$ measurements, we do not detect coherence in step~1 and proceed to step~2. Therefore, the expected number of calls to the oracle of \textbf{Algorithm~\ref{algorithm:coherence_detection}} is given by
\begin{align}
m  &= O\left( \sum_k \left[ p_k^C (  N_k +C)   + (1-p_k^C) C \right] \right).
\end{align}
Taking $C = 1$ this expression simplifies to the number of average steps with which step~2 terminates, i.e.,
\begin{align} 
 m &= O\left( \sum_k p_k N_k \right) .
\end{align}
By Proposition~\ref{prop:coh_search_scaling}, each call to \textbf{Coh-Search$\bm{(k)}$} terminates with $N_k  = O( 1/\sqrt{1-p_k})$ calls to the oracle with probability of success $\ge 1/4$. Therefore, 
\begin{align} \label{subeq:avg_m_N}
 m =  O \left( \sum_k  \frac{p_k}{\sqrt{1-p_k}}  \right) .
\end{align}
 Now,
\begin{align}
    \sum_k  \frac{p_k}{\sqrt{1-p_k}} 
&= \frac{p_{k_{\max}}}{\sqrt{1-p_{k_{\max}}}} + \sum_{j \neq k_{\max}}  \frac{p_j}{\sqrt{1-p_j}} \notag \\
&= \frac{1- c(\ket{\psi})}{\sqrt{c (\ket{\psi})}} + \sum_{j \neq k_{\max}}  \frac{p_j}{\sqrt{1-p_j}} \label{subeq:p_sums1} ,
\end{align}
Now, for all $j \neq k_{\max}$, we must have $p_j \le \frac{1}{2}$, and therefore
\begin{align}
\sum_{j \neq k_{\max}}  \frac{p_j}{\sqrt{1-p_j}} &\le \sum_{j \neq k_{\max}}  \frac{p_j}{\sqrt{1- \frac{1}{2}}} = \sqrt{2}\sum_{j \neq k_{\max}} p_j \notag \\
&= \sqrt{2} (1 - p_{k_{\max}}) = \sqrt{2} c(\ket{\psi}) .\label{subeq:p_sums2} 
\end{align}
Combining Eqs.~(\ref{subeq:p_sums1}) and (\ref{subeq:p_sums2}) we obtain the bound
\begin{align}
\sum_k \frac{p_k}{\sqrt{1-p_k}} &\le \frac{1}{\sqrt{c(\ket{\psi}) }} + \sqrt{c(\ket{\psi}) } + \sqrt{2} c(\ket{\psi}) \notag \\
&\le \frac{1}{\sqrt{c(\ket{\psi}) }} + 1+ \sqrt{2} . \label{subeq:bound_sum_k}
\end{align}
Combining Eqs.~(\ref{subeq:avg_m_N}) and (\ref{subeq:bound_sum_k}), we can conclude that 
\begin{align}
  m  =    O\left(\frac{1}{\sqrt{c(\ket{\psi}) }}\right),
\end{align}
calls to $U_\psi$ and $U_\psi^\dagger$ allow one to detect coherence  with constant probability $\ge 1/4$. 

Finally, running this algorithm in parallel with $r=O(\log(1/\delta))$ executions, terminating whenever one of the individual runs terminates, boosts the success probability to $\ge 1-\delta$, with the associated sample complexity 
\begin{align}
 m  =    O\left(\frac{1}{\sqrt{c(\ket{\psi}) }} \log \delta^{-1}\right),
\end{align}
as claimed. \end{proof}

The lower bound proof follows much the same idea the corresponding proof of \thmref{thm:complexity_state}, but instead exploiting established results in unitary discrimination~\cite{Weggemans2025lowerboundsunitary}, where the quantum objects one seeks to discriminate are unitary channels as opposed to quantum states. A proof can be found in \appref{appx:lower_bound_proof_unitary}.

Our lower bound assumes that any intermediate processing in between each black-box call to $U_\psi$ or $U_\psi^\dagger$ is unitary. However, recent results have shown that the query complexity of Boolean functions cannot be reduced under general higher order quantum computations~\cite{Abbot2024Complexity_Indefinite}, which for instance, illustrates that the scaling Grover's algorithm cannot be improved even under quantum circuits which involve superpositions or coherent control of the order of queries. Given such results, it may be the case that our lower bound also holds in the more general setting where our adaptive strategy is constructed from a sequence of general quantum channels. Alternative routes to generalizing our lower bound could draw on recent results in quantum channel discrimination~\cite{huang2025querycomplexitiesquantumchannel}. However, we consider such an extension to be outside of the scope of the current work.

A comparison of the sample complexity scaling achievable under our two protocols for coherence detection is shown in \figref{fig:complexity_coherence}.

\section{Coherence estimation}
\label{sec:coherence_estimation}

Aside from detecting the presence of coherence, an important associated task is to be able to quantify the amount of coherence in a given state. Here we show that it is possible to estimate an upper
bound to the amount of coherence present in an unknown pure state by using the well-known protocol of {\it amplitude estimation}~\cite{Brassard2002OGAmpAmp,Aaronson2020ApproximateCounting,Suzuki2020AmpEst,Grinko2021IterativeAE}. Specifically, we seek to estimate the unknown parameter $c_k \coloneqq 1-p_k$ in \eqref{subeq:phi_a_AA}, which satisfies $c_k \ge c(\ket{\psi})$ for all $k \in \{0,\dots, d-1\}$. To do so we can use the standard amplitude estimation protocol in~\cite{Brassard2002OGAmpAmp} to perform phase estimation~\cite{NielsenChuang2010} on the Grover operator $Q_k$ defined by Eqs.~(\ref{eq:grover_1}) and (\ref{eq:grover_2}). As shown in Theorem~12 of Ref.~\cite{Brassard2002OGAmpAmp}, with probability at least $8 / \pi^2$ the error in our estimate $\hat{c}_k$ using this procedure is given by
\begin{align}
    \varepsilon \le \frac{2 \pi \sqrt{c_k (1 - c_k)}}{m} + \frac{\pi^2}{m^2}.
\end{align}
Moreover, this success probability can be readily increased to near unity by repeating the procedure multiple 
times---obtaining the collection of estimates $\{\hat{c}_k^1, \hat{c}_k^2, \dots \hat{c}_k^\ell\}$---and taking the estimate corresponding to the median value.

\begin{alg}[Amplitude-amplified coherence estimation] \label{algorithm:coherence_estimation}
\textbf{Inputs:} Black-box access to $U_\psi$, $U_\psi^\dagger$, and $cU_\psi$. Desired accuracy $\varepsilon > 0$.

\textbf{Returns:} estimate $\hat{c}_k$ of $c_k= 1 - \abs{\braket{k}{\psi}}^2$. 

\begin{itemize}
    \item[\textbf{1:}] With a small constant number $C = O(1)$ of calls to the oracle measure $C$ copies of the state $\ket{\psi}$ in the incoherent basis. 
 
      \item[\textbf{2:}]  Select an outcome label $k$ which was observed most times in the measurement and construct the Grover operator $Q_k = V_\psi V_k$. 

    \item[\textbf{3:}] Perform amplitude estimation on $Q_k$ to produce an estimate $\hat{c}_k$ of $c_k$ such that $\abs{\hat{c}_k - c_k} \le \varepsilon$.

\end{itemize}
\end{alg}

The sample complexity of this algorithm scales as $m=O(1/\varepsilon)$, which provides a quadratic improvement over the expected value from classical Monte-Carlo sampling which requires $O(1/\varepsilon^2)$. We also highlight that on the typical run of this protocol the estimate returned will actually coincide with an estimate of $c(\ket{\psi})$, due to the identity $c(\ket{\psi}) = c_{k_{\mathrm{max}}}$.

\section{Robustness to noise}
\label{sec:robustenss_to_noise}

Amplitude amplification and amplitude estimation are  typically considered within the regime of fault-tolerant quantum computing. This is due to the fact that large depth quantum circuits are usually required, and as such a quadratic speed-up may be quickly washed out by the presence of noise. In this section we study the robustness of our coherence detection and estimation protocols against uncorrectable noise. Specifically, we consider the regime of finite but small probability of an error  occurring after each circuit layer. While much more comprehensive error analyses can be found for amplitude amplification and estimation with noise~\cite{Brown2020ampestNoise,Tanaka2021AestNoise,Herbert2024AmpEstNoise}, here we provide a simple back-of-the-envelope calculation to identify the general parameter regime where we expect our protocol to function.

To model noise which might occur on the system, we imagine after each call to the oracle, labeled by $i \in [m]$, our system is subject to some noise channel 
\begin{align}
  (1-p_{\mathrm{err}}) \mathrm{id} + p_{\mathrm{err}} \N_i,
\end{align}
corresponding to the situation where with probability $(1-p_{\mathrm{err}})$ no noise occurs and with probability 
$p_{\mathrm{err}}$ the system is subjected to the noise channel $\N_i$.
We remain agnostic about the precise action of the noise at each iteration, specified by the channel $\N_i$, and simply concern ourselves with the probability $p_{\mathrm{err}}$ with which noise is likely to occur on each Grover iteration.

After $m$ calls to the oracle, $\psi^{\mathrm{out}}$ is the final state of the noiseless algorithm, the final state of the noisy circuit will be of the form
\begin{align}
    (1-p_{\mathrm{err}})^m \psi^{\mathrm{out}} + \left(1 - (1-p_{\mathrm{err}})^m  \right) \sigma^{\mathrm{junk}},
\end{align}
for some state $\sigma^{\mathrm{junk}}$.
By the binomial theorem, we can compute
\begin{align}
    (1-p_{\mathrm{err}})^m &= \sum_{k=0}^m (-1)^k {\binom{m}{k}} p_{\mathrm{err}}^k \notag \\
&= 1 - m p_{\mathrm{err}} + O(p_{\mathrm{err}}^2).
\end{align}

Therefore, if our original success probability is $P_{\mathrm{suc}}$, incorporating noise we find that so long as the modified success probability $\tilde{P}_{\mathrm{suc}}$ satisfies
\begin{align}
  \tilde{P}_{\mathrm{suc}} \coloneqq  (1-p_{\mathrm{err}}m)P_{\mathrm{suc}} \ge \delta,
\end{align}
the algorithm will have the same overhead scaling as before. The expected number of calls to the oracle $ m $ needed as a function of $c(\ket{\psi})$, {will be, using \thmref{thm:coherence_complexity_amp_estimation}
\begin{align}
   m   \approx  \frac{1}{\sqrt{c(\ket{\psi})}}.
\end{align}

Therefore, we we expect our amplitude-amplified coherence detection protocol to be robust against noise so long as

\begin{align}
  p_{\mathrm{err}} \ll \sqrt{c(\ket{\psi})}.
\end{align}
Roughly speaking, the probability of an error occurring after each call to the oracle must be much smaller than the amount of coherence in the state.

\section{Outlook}
\label{sec:conclusions}

In this work, we have developed protocols for amplitude-amplified coherence detection and estimation for unknown pure states. We have shown how having access to the unitary transformation (and its inverse) that prepares a certain coherent state $\ket{\psi}$, as opposed to copies of the state, can lead to a quadratic reduction in the sample complexity needed to detect or estimate coherence.  Our proposed algorithms make use of amplitude amplification protocols to enhance the probability of successfully detecting or estimating the coherence of the state. 

A natural avenue for future work would be to explore the extent to which these techniques can be applied to the detection of other non-classical resources in the context of specific resource theories beyond coherence, such as entanglement theory~\cite{Horodecki2009Entanglement} or resource theories of magic~\cite{Veitch2014stabilizer}, or conversely for testing separability and staberlizerness. Another non-classical resource of recent interest is basis-independent or set coherence~\cite{Designolle2021BasisIndep,Taira2023ExperimentalCertification}, for which we expect our lower bounds to immediately hold. A second obvious extension of the current work would be to generalise the results of this paper to the detection of coherence in mixed states, or more generally in quantum channels.

As we approach the regime of fault-tolerant quantum computing, where fully error-corrected logical qubits are readily available, it is no too far-fetched to expect that protocols such as amplitude amplification and estimation will be at our disposal as a standard primitive within the fault-tolerant toolkit. If such a reality comes into fruition, we would like to see an experimental implementation of our protocols for amplitude-amplified coherence detection and estimation.


\section{Acknowledgements}

We would like to thank Adi Makmal for interesting discussions on a different problem which led us to this topic. We also acknowledge funding from grants PID2024-162155OB-I00, PID2024-162141OB-I00 funded by MICIU/AEI/10.13039/501100011033 and, as appropriate, by “ERDF A way of making Europe”, by “ERDF/EU”, by the “European Union” or by the “European Union NextGenerationEU/PRTR”, as well as funding from Ministry for Digital Transformation and of Civil Service of the Spanish Government through projects, QUANTUM ENIA project call - Quantum Spain project, and by the European Union through the Recovery, Transformation and Resilience Plan - NextGenerationEU within the framework of the Digital Spain 2026 Agenda. MS also ackowledges support from Ayuda Ramón y Cajal 2021 (RYC2021-032032-I, MICIU/AEI/10.13039/501100011033, ESF+) as well as Project FEDER C-EXP-256-UGR23 Consejería de Universidad, Investigación e Innovación y UE Programa FEDER Andalucía 2021-2027.

\appendix

\section{Proof of the lower bound in \thmref{thm:coherence_complexity_amp_estimation}}
\label{appx:lower_bound_proof_unitary}

Our proof leverages a recent result from two-element unitary discrimination given in Ref.~\cite{Weggemans2025lowerboundsunitary}, where we are given black-box access to a unitary $U \in\{U_1,U_2\}$ and we are tasked with deciding whether $U= U_1$ or $U=U_2$ up to some probability of error. Let us briefly summarize the context. 
The authors consider quantum query algorithms for property testing (also known as C-testers, where C refers to a complexity class) according to the following model: beginning with the initial state $\ket{\phi}$, the tester can make queries to $U \in \{U_1 ,U_2\}$ interleaved with input-independent unitaries $V_i$, and finally a two-outcome measurement is made which decides whether $U= U_1$ or $U=U_2$ with probability $\ge 2/3$. A $\mathrm{C}$-tester with  $\mathrm{C}=\mathrm{BQP}$, corresponds to the case where the initial state is of the form $\ket{\phi}= \ket{0}^{\otimes \mathrm{poly}(n)}$, which is precisely the setting we consider in of \thmref{thm:coherence_complexity_amp_estimation} as outlined at the start of \secref{sec:amp_amp_coherence_detection}.
In this context, Theorem~1 of Ref.~\cite{Weggemans2025lowerboundsunitary} provides a lower bound on any such $\mathrm{BQP}$-tester (in fact, the result presented therein holds for any $\mathrm{C}$-tester such that $\mathrm{C} \subseteq \mathrm{QMA}(2)/\mathrm{qpoly}$), in terms of the \textit{diamond-norm distance} between the two unitary channels:
\begin{align}
   \frac{1}{2} \norm{\U_1 - \U_2}_\diamond \coloneqq \frac{1}{2} \max_{\tr(\rho)=1, \rho \ge 0} \norm{\U_1 (\rho) - \U_2(\rho)}_{1},
\end{align}
where $\norm{X}_{1} \coloneqq \sqrt{X^\dagger X}$ is the trace norm.
We restate the lower bound as we need it from this theorem in the following proposition.
\begin{proposition}[Diamond-norm lower bound for unitary channel discrimination~\cite{Weggemans2025lowerboundsunitary}]\label{prop:Diamond_norm} Let $U_1 , U_2 \in U(d)$ such that $0 \not\in \mathrm{conv}( \mathrm{eig} (U_1^\dagger U_2) )$. 
Now let $U \in \{ U_1, U_2 \}$ be a unitary to which one has black-box access, including controlled operations, applications of the inverse and a combination of both. Suppose one has to decide whether
 (i) $U = U_1$ or (ii) $U =  U_2$ holds, promised that either one of them is the case and 
 $\frac{1}{2}\norm{\U_1 - \U_2}_\diamond \le \varepsilon$ where $\U_i (\cdot) \coloneqq U_i (\cdot) U_i^\dagger$ for each $i \in \{1,2\}$.
 Then any $\mathrm{BQP}$-tester which decides with success probability $\ge  2/3$ whether (i) or (ii) holds, needs to make at least $m = \Omega(1/\varepsilon)$ queries to $U$.
\end{proposition}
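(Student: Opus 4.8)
The plan is to prove the lower bound contrapositively, by upper bounding the distinguishing power of any $m$-query $\mathrm{BQP}$-tester by a quantity linear in $m$ and in the single-query diamond distance, and then inverting. First I would record the operational meaning of the objective. A tester that decides between (i) and (ii) with success probability $\ge 2/3$ under a uniform prior achieves a bias of at least $1/3$, and by the Holevo--Helstrom theorem the optimal bias of any tester that makes $m$ queries is at most $\tfrac12\norm{\A_1 - \A_2}_\diamond$, where $\A_i$ is the overall channel implemented by the tester when the oracle is $U_i$. Hence it suffices to establish $\tfrac12\norm{\A_1 - \A_2}_\diamond = O(m\varepsilon)$, since combining the two inequalities forces $m = \Omega(1/\varepsilon)$.

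The core of the argument is a hybrid (telescoping) estimate at the level of channels. Writing $\A_i = \V_m \circ \E_i \circ \V_{m-1} \circ \cdots \circ \E_i \circ \V_0$, where each $\E_i$ is the (possibly controlled, possibly inverted) application of $\U_i$ at one query slot and the $\V_j$ are the fixed, input-independent processing channels, I would introduce the hybrids $\G_t$ in which the first $t$ query slots use $U_1$ and the remaining $m-t$ use $U_2$, so that $\G_m = \A_1$ and $\G_0 = \A_2$. Telescoping $\A_1 - \A_2 = \sum_{t=1}^m (\G_t - \G_{t-1})$ and using that the diamond norm is (a) subadditive, (b) submultiplicative under composition so that pre- and post-composition with the fixed channels cannot increase it, and (c) stable under tensoring with the identity on ancillary registers, each term obeys $\norm{\G_t - \G_{t-1}}_\diamond \le \norm{\E_1 - \E_2}_\diamond$. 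This yields $\norm{\A_1 - \A_2}_\diamond \le m\,\norm{\E_1 - \E_2}_\diamond$, reducing the whole problem to a single-query statement.

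It then remains to bound the per-query diamond distance $\norm{\E_1 - \E_2}_\diamond$ by the hypothesis quantity $\norm{\U_1 - \U_2}_\diamond \le 2\varepsilon$. Here I would invoke the closed-form expression for the diamond distance of two unitary channels: with $W := U_1^\dagger U_2$ having eigenvalues $\{e^{i\phi_j}\}$, one has $\tfrac12\norm{\U_1 - \U_2}_\diamond = \sqrt{1 - \mu^2}$, where $\mu = \mathrm{dist}\!\left(0, \mathrm{conv}\{e^{i\phi_j}\}\right)$. The hypothesis $0 \notin \mathrm{conv}(\mathrm{eig}(W))$ gives $\mu > 0$, i.e.\ the eigenphases are confined to an arc of width $\Theta < \pi$, and a diamond distance at most $\varepsilon$ forces $\Theta = O(\varepsilon)$ via $\sin(\Theta/2)\le\varepsilon$. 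An inverted query reflects the eigenphases by $\phi_j \mapsto -\phi_j$ and so preserves the arc width, leaving the per-query diamond distance unchanged; combining this with the telescoped bound then gives $\tfrac12\norm{\A_1 - \A_2}_\diamond = O(m\varepsilon)$ and hence $m = \Omega(1/\varepsilon)$.

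I expect this last step to be the main obstacle, specifically the treatment of \emph{controlled} queries. Controlled access is phase-sensitive: a pure global-phase difference between $U_1$ and $U_2$ is invisible to the channels $\U_1,\U_2$ (zero diamond distance) yet detectable in a single controlled query, because a controlled query adjoins the reference eigenvalue $1$ to the spectrum of $W$. Without the spectral hypothesis the per-query bound can therefore fail outright, and making $\norm{\E_1 - \E_2}_\diamond = O(\varepsilon)$ rigorous requires using $0 \notin \mathrm{conv}(\mathrm{eig}(W))$ to fix a consistent phase reference and to control the effect of adjoining this reference eigenvalue on the convex hull. Carefully tracking the arc structure across every admissible combination of control and inversion the tester may employ, rather than the clean telescoping itself, is where the genuine difficulty lies.
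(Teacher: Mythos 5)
The paper does not actually prove this proposition: it is imported verbatim (as Theorem~1 of Ref.~\cite{Weggemans2025lowerboundsunitary}), so your attempt must be measured against the cited source rather than an in-paper argument. Your outer scaffolding is the standard one and is fine as far as it goes: Holevo--Helstrom bounds the bias of a BQP-tester by $\tfrac12\norm{\A_1-\A_2}_\diamond$, and the hybrid/telescoping estimate with contractivity of the diamond norm under the fixed interleaving channels reduces everything to a per-query bound $\norm{\E_1-\E_2}_\diamond = O(\varepsilon)$. Two caveats, though: the cited theorem holds for any $\mathrm{C}$-tester with $\mathrm{C}\subseteq \mathrm{QMA}(2)/\mathrm{qpoly}$, a regime where hybrid arguments do not directly apply (the witness/advice may depend on which unitary is in the box), so your route could at best recover the weaker BQP statement restated here; and, more importantly, your per-query step has a genuine hole.

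The hole is exactly the step you flag and then leave unresolved, and it is worse than ``careful bookkeeping of arc structure'': the hypotheses, as you use them, are \emph{insufficient} for your per-query bound. Take $U_2 = -U_1$. Then $W = U_1^\dagger U_2 = -\id$ has spectrum $\{-1\}$, so $0 \not\in \mathrm{conv}(\mathrm{eig}(W))$ and the arc width is zero, and $\tfrac12\norm{\U_1-\U_2}_\diamond = 0$; yet controlled-$U_1$ and controlled-$U_2$ are \emph{perfectly} distinguishable with a single query (prepare $\ket{+}\ket{0}$, apply the controlled oracle, measure the control in the $\pm$ basis). So no bound of the form $\norm{\E_1-\E_2}_\diamond = O(\varepsilon)$ can follow from the arc \emph{width} alone: a controlled query adjoins the eigenvalue $1$ to the spectrum of $W$ and is therefore sensitive to the \emph{location} of the arc, which is completely unconstrained by $0\not\in\mathrm{conv}(\mathrm{eig}(W))$ together with the diamond-norm promise. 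The natural repair---telescoping in operator norm after an optimal phase alignment $\eta \coloneqq \min_\phi \norm{U_1 - e^{i\phi}U_2}$, which indeed satisfies $\norm{cU_1 - c(e^{i\phi}U_2)} = \norm{U_1 - e^{i\phi}U_2}$ and is inverse-stable---only bounds the distinguishability of $U_1$ from $e^{i\phi}U_2$, not from the actual black box $U_2$: the global phase reattaches through the control, as the example shows. Completing the proof requires a more careful specification of the promise/access model for controlled queries (this is handled in the source, and is elided in the restatement above), and your sketch, which explicitly defers this point, does not supply it.
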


With this in hand, we are now in a position to prove the lower bound in \thmref{thm:coherence_complexity_amp_estimation}, which for convenience we restate in the following lemma.

\begin{lemma}  Given black-box access to $U \in \{U_\psi, U_\psi^\dagger\}$, detecting coherence in the unknown state $\ket{\psi} = U_\psi \ket{0} \in \H$, up to fixed error probability $\delta < \frac{1}{2}$, requires 
\begin{align}
    m  = \Omega\left(  \frac{1}{\sqrt{c(\ket{\psi})}}\log \delta^{-1} \right),
\end{align}
 calls to $U$ on average.

\end{lemma}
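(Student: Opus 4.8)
The plan is to mirror the lower-bound half of \thmref{thm:complexity_state}, replacing state discrimination by \emph{unitary channel} discrimination and invoking Proposition~\ref{prop:Diamond_norm}. As in the state case, the starting observation is that detecting coherence in the unknown $\ket{\psi}$ can only be harder than doing so when one is additionally handed the closest incoherent state $\ket{k_{\max}}$, since the latter protocol may always discard this information; hence it suffices to lower bound the informed task. With $\ket{k_{\max}}$ fixed, detecting coherence reduces to deciding whether the black-box oracle prepares the coherent state $\ket{\psi}=U_\psi\ket{0}$ or an incoherent state equal to $\ket{k_{\max}}$. I would therefore construct a concrete pair of unitaries $U_1,U_2$ realising these two alternatives and observe that \emph{any} coherence detector, when run on this pair, is automatically a two-element unitary discriminator---declaring ``coherent'' in case $U_1$ and ``incoherent'' in case $U_2$---so that its query count is bounded below by the discrimination complexity furnished by Proposition~\ref{prop:Diamond_norm}.

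The crux is to choose the incoherent alternative $U_2$ so that the two channels are as \emph{close} as possible in diamond norm, which maximises the resulting query lower bound. Taking $U_1\coloneqq U_\psi$, I would set $U_2\coloneqq U_\psi W$, where $W$ is the minimal rotation carrying $\ket{0}$ onto $U_\psi^\dagger\ket{k_{\max}}$: it acts as the identity outside the two-dimensional subspace $\mathrm{span}\{\ket{0},U_\psi^\dagger\ket{k_{\max}}\}$ and as a planar rotation by angle $\gamma$ within it, where $\cos\gamma=\abs{\braket{k_{\max}}{\psi}}=\sqrt{1-c(\ket{\psi})}$. Then $U_2\ket{0}=U_\psi W\ket{0}=\ket{k_{\max}}$ is incoherent by construction, while $U_1^\dagger U_2=W$ has spectrum $\{e^{\pm i\gamma}\}\cup\{1\}$, subtending a total arc $2\gamma<\pi$; in particular $0\notin\mathrm{conv}(\mathrm{eig}(U_1^\dagger U_2))$, so the hypothesis of Proposition~\ref{prop:Diamond_norm} holds. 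The standard formula for the diamond distance of unitary channels, $\tfrac12\norm{\U_1-\U_2}_\diamond=\sqrt{1-\nu^2}$ with $\nu=\mathrm{dist}(0,\mathrm{conv}(\mathrm{eig}(U_1^\dagger U_2)))=\cos\gamma$ (the nearest point of the hull lies on the chord joining $e^{\pm i\gamma}$), then gives
\begin{align}
\tfrac12\norm{\U_1-\U_2}_\diamond=\sqrt{1-\cos^2\gamma}=\sin\gamma=\sqrt{c(\ket{\psi})}.
\end{align}
Feeding $\varepsilon=\sqrt{c(\ket{\psi})}$ into Proposition~\ref{prop:Diamond_norm} yields that any $\mathrm{BQP}$-tester succeeding with probability $\ge 2/3$---hence any detector with error $\delta\le 1/3$---must use $\Omega(c(\ket{\psi})^{-1/2})$ queries.

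It remains to recover the $\log\delta^{-1}$ factor. By analogy with \thmref{thm:complexity_state} one expects that driving the error probability from a constant down to $\delta$ forces $\Omega(\log\delta^{-1})$ independent confirmations, each costing $\Omega(c(\ket{\psi})^{-1/2})$ queries, giving the claimed average $m=\Omega(c(\ket{\psi})^{-1/2}\log\delta^{-1})$. I expect this to be the main obstacle and the step requiring the most care: in contrast to the state setting, two \emph{fixed} unitaries can be discriminated \emph{perfectly} once the accumulated rotation reaches $\pi/2$, i.e.\ with $O(c(\ket{\psi})^{-1/2})$ queries and no $\delta$-dependence, so the logarithmic overhead cannot be extracted from the informed (known-$k_{\max}$) reduction alone. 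Making the logarithmic factor rigorous would instead require exploiting the genuine absence of knowledge of $k_{\max}$ (and of $c(\ket{\psi})$), rather than importing the Helstrom-style estimate of \thmref{thm:complexity_state} verbatim; this is the delicate point I would need to settle.
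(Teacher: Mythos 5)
Your proposal follows essentially the same route as the paper's own proof: reduce coherence detection to the informed two-element discrimination of $U_\psi$ against a unitary preparing $\ket{k_{\max}}$, observe that the relative unitary is a planar rotation by an angle $\gamma$ with $\cos\gamma=\sqrt{1-c(\ket{\psi})}$ acting trivially on the orthogonal complement, and invoke Proposition~\ref{prop:Diamond_norm}. Two minor differences, both in your favor: the paper bounds the diamond distance from above by the operator-norm distance, $\frac{1}{2}\norm{\U_\psi - \U_{k_{\max}}}_\diamond \le \norm{U_\psi - U_{k_{\max}}} = \abs{2\sin(\theta/2)} \le \sqrt{2}\sqrt{c(\ket{\psi})}$, whereas you compute the exact value $\sin\gamma = \sqrt{c(\ket{\psi})}$ from the eigenvalue-hull formula; and you explicitly verify the hypothesis $0\notin\mathrm{conv}(\mathrm{eig}(U_1^\dagger U_2))$, which the paper's proof relies on but never checks. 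Either way one obtains the same $\Omega\bigl(1/\sqrt{c(\ket{\psi})}\bigr)$ bound.

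On the $\log\delta^{-1}$ factor: your concern is well founded, and the paper does not resolve it either. The paper's final step asserts that repeating $r=O(\log(1/\delta))$ times boosts the success probability to $\ge 1-\delta$ and therefore at least $\Omega\bigl(\log\delta^{-1}/\sqrt{c(\ket{\psi})}\bigr)$ queries are needed --- but this is a sufficiency statement about one particular protocol, not a necessity statement about all protocols, so it does not establish a lower bound. Moreover, as you correctly observe, the informed two-element reduction cannot carry any $\delta$-dependence at all: two known unitaries whose relative rotation subtends angle $2\gamma$ can be discriminated \emph{perfectly} with $O(1/\gamma)=O\bigl(1/\sqrt{c(\ket{\psi})}\bigr)$ queries, so any lower bound extracted from this reduction saturates at $\Omega\bigl(1/\sqrt{c(\ket{\psi})}\bigr)$; recovering the logarithmic factor would require exploiting the ignorance of $k_{\max}$ and of $c(\ket{\psi})$ itself. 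In short, your proposal reproduces everything the paper actually proves, and the gap you flag is a genuine gap in the paper's own argument rather than a deficiency of your approach.
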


\begin{proof}[Proof of lower bound]
Detecting coherence in $\ket{\psi}$ given black box access to $U_\psi$ is equivalent to the unitary discrimination task of distinguishing $U_\psi$ from the set $\mathcal{X} \coloneqq \{ U \} $  of \textit{all} unitaries in $U(d)$ which satisfy $U \ket{0} = e^{i \theta}\ket{i}$ for some $\ket{i} \in \I$ and some angle $\theta \in [0,2\pi)$. Lower bounds on the easier task of discriminating $U_\psi$ from the single unitary $ U_{k_{\max}} \in \X$, where $U_{k_{\max}}$ is any choice of unitary which satisfies $U_{k_{\max}} \ket{0} = \ket{k_{\max}}$, straightforwardly follow form lower bounds on the original task, so we proceed by proving a lower bound on this two-element discrimination task.

 Our allowed coherence testers are $\mathrm{BQP}$-testers. Therefore, we can apply Proposition~$\ref{prop:Diamond_norm}$. To this end, the diamond distance (induced by the diamond norm) between $\U_\psi(\cdot) \coloneqq U_\psi(\cdot) U_\psi^\dagger$ and $\U_{k_{\max}}(\cdot) \coloneqq U_{k_{\max}}(\cdot) U_{k_{\max}}^\dagger$ is upper bounded by the operator (i.e. spectral) norm distance as [e.g.~see Proposition~1.6 of \cite{10353133}]
\begin{align} \label{subeq:diamond_vs_operator}
   &\frac{1}{2}\norm{\U_\psi - \U_{k_{\max}}}_{\diamond} \notag \le \norm{U_\psi - U_{k_{\max}}},
\end{align}
where $\norm{X}$ is the largest singular value of $X$. Without loss of generality, up to a globally irrelevant phase, we can always write 
\begin{align}
  \ket{\psi} =  \cos \theta \ket{k_{\max}} + \sin \theta \ket{{\perp}},
\end{align}
in terms of some state $\ket{\perp}$ in the orthogonal complement of $\ket{k_{\max}}$, with $\cos \theta \equiv \sqrt{p_{k_{\max}}}$. Let us also define
\begin{align}
    \ket{\psi_\perp} \coloneqq - \sin \theta \ket{k_{\max}} + \cos \theta \ket{{\perp}},
\end{align}
which is normalised and satisfies $\braket{\psi}{\psi_\perp} = 0$. Let us complete $\ket{\psi}$ and $\ket{\psi_\perp}$ to the orthonormal basis $\{ \ket{\psi}, \ket{\psi_\perp}, \ket{u_2}, \dots \ket{u_{d-1}} \}$ for $\H$. We can choose to define the action of $U_\psi$ on the computational basis as
\begin{align}
    U_\psi \ket{0} &= \ket{\psi}; \notag \\
     U_\psi \ket{1} &= \ket{\psi_\perp}; \notag \\
      U_\psi \ket{i} &= \ket{v_i}; \quad \forall i \in \{2,\dots,d-1\}.
\end{align}
Since 
\begin{align}
    S \coloneqq \mathrm{span}\{ \ket{\psi}, \ket{\psi_\perp} \} =\mathrm{span}\{ \ket{k_{\max}}, \ket{{\perp}} \},
\end{align}  
$\{ \ket{k_{\max}}, \ket{{\perp}}, \ket{u_2}, \dots \ket{u_{d-1}} \}$ also forms a valid orthonormal basis for $\H$ and we are free to define $U_{k_{\max}}$ via its action on the incoherent basis as
\begin{align}
       U_{k_{\max}} \ket{0} &= \ket{{k_{\max}}}; \notag \\
     U_{k_{\max}} \ket{1} &=  \ket{{\perp}}; \notag \\
      U_{k_{\max}} \ket{i} &= \ket{v_i}; \quad \forall i \in \{2,\dots,d-1\}.
\end{align}
$U_\psi$ and $U_{k_{\max}}$ act identically on the orthogonal complement $S^{\perp}$ to $S$, and differ only on the 2D subspace $S$, where in the $\{\ket{k_{\max}}, \ket{{\perp}}\}$ basis we have 
\begin{align}
    U_{\psi|S} = R(\theta) = \begin{pmatrix}
        \cos \theta & - \sin \theta \\ 
        \sin \theta & \cos \theta
    \end{pmatrix}; \ U_{k_{\max}| S} = \id.
\end{align}
It follows that their difference is block-diagonal 
\begin{align}
    U_{k_{\max}} - U_\psi = \begin{pmatrix}
       \id - R(\theta) & 0 \\ 
        0 & 0
    \end{pmatrix}.
\end{align}
Consequently,
\begin{align}
    \norm{ U_{k_{\max}} - U_\psi} = \norm{\id - R(\theta)}.
\end{align}
The eigenvalues of $\id - R(\theta)$ are $1 - e^{ \pm i \theta}$, hence 
\begin{align}
     \norm{ U_{k_{\max}} - U_\psi} = \max_{\pm} \abs{1 - e^{ \pm i \theta}} = \abs{2 \sin \theta/2} .
\end{align}
From the trigonometric identity $\sin^2 \theta/2 = \frac{1}{2}(1- \cos \theta)$ and the definition $\cos \theta = \sqrt{p_{k_{\max}}}$ we obtain
\begin{align} \label{subeq:operator_normbound}
    \norm{ U_{k_{\max}} - U_\psi} =  \sqrt{2} \sqrt{1- \sqrt{p_k}} \le \sqrt{2}\sqrt{1- p_k}.
\end{align}
Combining \eqref{subeq:diamond_vs_operator} and \eqref{subeq:operator_normbound} gives
\begin{align}
    \frac{1}{2}\norm{\U_\psi - \U_{k_{\max}}}_{\diamond} \le \sqrt{2} \sqrt{c(\ket{\psi})}
\end{align}

Now, let us assume that $\frac{1}{2}\norm{\U_\psi - \U_{k_{\max}}}_{\diamond} = \varepsilon$, such that
\begin{align}
    \varepsilon \le  \sqrt{2} \sqrt{c(\ket{\psi})}. \label{subeq:upper_bound_eps}
\end{align}
Then according to Proposition~\ref{prop:Diamond_norm}, any $\mathrm{BQP}$-tester that decides with success probability $\ge 2/3$ between $U_\psi$ from $U_{k_{\max}}$, needs to make at least 
\begin{align} \label{subeq:m_bounds_omega}
   m = \Omega\left( 1/\varepsilon \right) \ge \Omega\left( 1/\sqrt{c(\ket{\psi}} \right)  , 
\end{align}queries to the black-box unitary. Repeating $r=O(\log(1/\delta))$ times, boosts the success probability to $\ge 1-\delta$, and needs to make at least
\begin{align} 
   m = \Omega\left(\log \delta^{-1}/\sqrt{c(\psi)}\right) , 
\end{align}
queries, as claimed.  \end{proof}

\bibliography{apssamp}

\end{document}